\documentclass[aps,prl,twocolumn,letter]{revtex4-2}

\usepackage{graphicx}
\usepackage{dcolumn}
\usepackage{bm}

\usepackage[utf8]{inputenc}
\usepackage[english]{babel}
\usepackage[T1]{fontenc}
\usepackage{amsmath,amssymb,amsfonts}
\usepackage{mathtools}
\usepackage{hyperref}
\usepackage{amsthm}
\usepackage{comment}
\usepackage{tikz}
\usepackage{lipsum}
\usepackage{xcolor}
\usepackage{appendix}
\usepackage{orcidlink}

\newtheorem{theorem}{Theorem}
\newtheorem{proposition}{Proposition}

\newcommand{\vct}[1]{\bm{#1}} 

\newcommand{\transp}{\mathrm{T}}

\begin{document}

\preprint{APS/123-QED}

\title{
A Rigorous Quantum Framework for Inequality-Constrained and Multi-Objective Binary Optimization}

\author{Sebastian Egginger\,\orcidlink{0009-0003-3831-2253}$^{1}$}
\email{sebastian.egginger@jku.at}
\author{Kristina Kirova\,\orcidlink{0000-0002-9854-0691}$^{1}$}
\author{Sonja Bruckner\,\orcidlink{0009-0004-8636-4805}$^{2}$}
\author{Stefan Hillmich\,\orcidlink{0000-0003-1089-3263}$^{2}$}
\author{Richard Kueng\,\orcidlink{0000-0002-8291-648X}$^{1}$}
\affiliation{$^1$Department of Quantum Information and Computation at Kepler (QUICK), Johannes Kepler University Linz, Linz, Austria}
\affiliation{$^2$Software Competence Center Hagenberg GmbH, Hagenberg, Austria}

\date{January 2026}

\begin{abstract}
Encoding objectives into Hamiltonians forms the foundation of quantum optimization. Including inequality constraints into these encodings remains a major challenge. In this letter, we show that including inequality constraints is equivalent to having multiple objectives. This insight motivates the Multi-Objective Quantum Approximation (\emph{MOQA}) framework, which comes with rigorous performance guarantees. \emph{MOQA} operates directly at the Hamiltonian level and is compatible with ground state solvers such as quantum annealing or the Quantum Approximate Optimization Algorithm (QAOA).
\end{abstract} 
\maketitle
\begin{figure*}
  \begin{centering}
  \includegraphics[width=\linewidth]{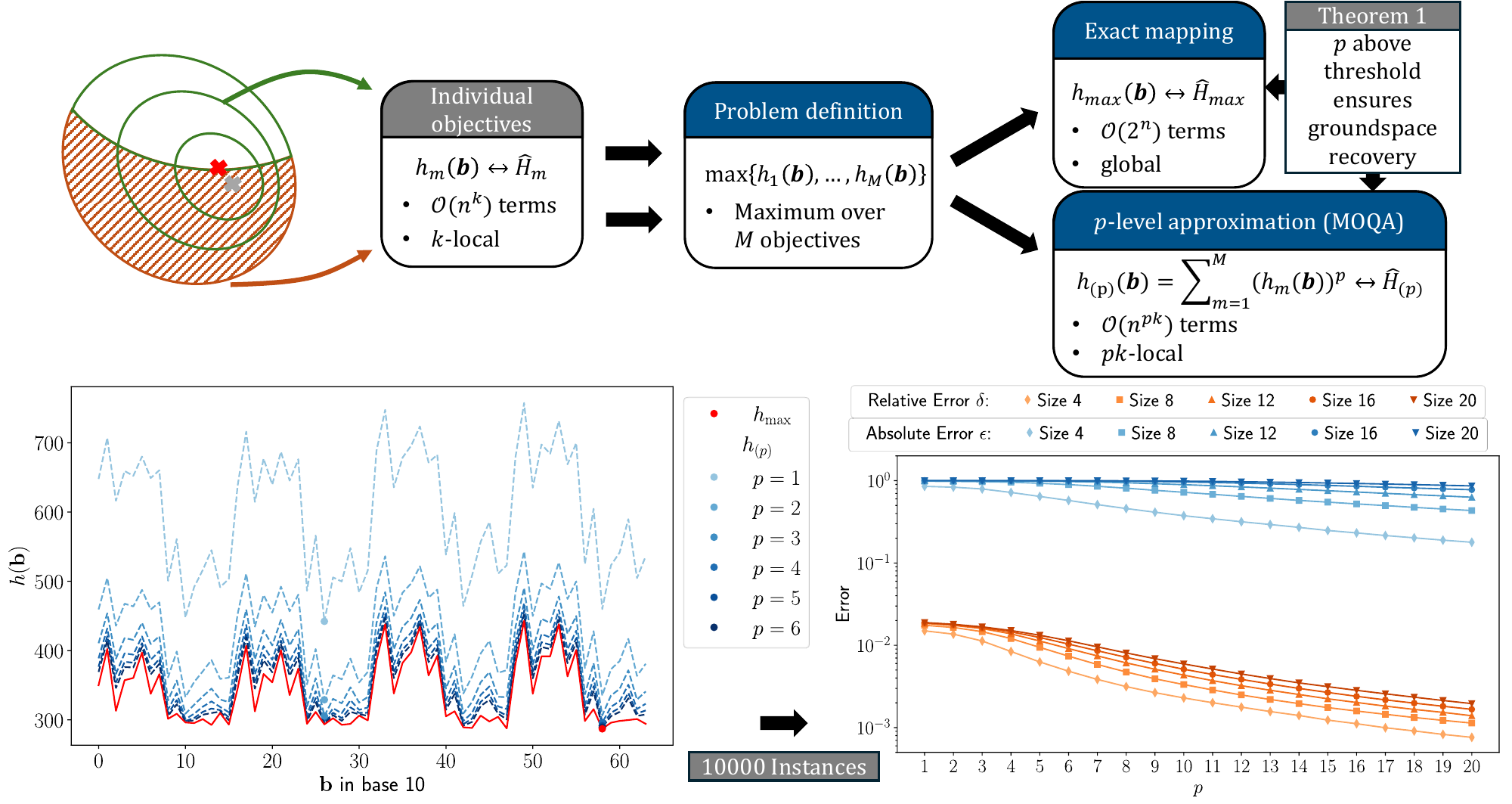}
  \end{centering}
  \caption{\emph{Concept and Performance of the \textit{MOQA} framework.}\\
  \textit{Top:} Summary of the framework. We start with a problem described by multiple individual objectives. Evaluating their maximum can be done either exactly using exponential resources or approximately with only a polynomial overhead using \emph{MOQA}. Theorem~\ref{thm: theorem} relates these two approaches.
  \textit{Bottom left:} Visualization of a QUBO in $n=6$ variables with a single linear inequality constraint ($\gamma=120$). The resulting $h_{\max}(\vct{b})$ is shown as the solid red line.
  This function is consequently approximated via $h_{(p)}(\vct{b})$ displayed by their $p$-th root as blue dashed lines. They become more tightly bound to $h_{\max}(\vct{b})$ as $p$ increases, leading to alignment of their global minima, as shown by the dots.
  \textit{Bottom right:} Sampling 10000 random instances of such problems allows for an analysis of the two error statistics $\delta$ and $\epsilon$ as a function of $p$.}
  \label{fig: main}
\end{figure*}

\section{Introduction and summary of results}

Binary optimization is regarded as one of the most promising areas for quantum algorithmic advantages that have the potential to matter in practice~\cite{Abbas_2024, Lanes_2025, Koch_2025, Kotil_2025}. Indeed, many important and hard problems in finance~\cite{Orus_2019, Grant_2021, Buonaiuto_2023}, logistics~\cite{Schworm_2024, Yu_2000, Salehi_2022, Azad_2023}, energy management~\cite{Bruckner_2024}, etc.\ can be recast as the task of minimizing an objective function $h(\vct{b})$ (also known as loss or cost function) over a $n$-dimensional binary vector $\vct{b} \in \left\{0,1\right\}^n$. These binary variables often reflect an exclusive choice, e.g. on/off in engineering or keep/sell in finance, and their discrete nature poses a challenge for traditional (classical) optimization techniques.
What is more, most binary optimization problems that arise from applications also feature additional constraints. These either come in the form of equality constraints $f (\vct{b})=0$, or inequality constraints $g (\vct{b}) \geq 0$~\cite{Lucas_2014, Karp_1972}.

Most prevalent quantum strategies for solving binary optimization problems encode the binary vector $\vct{b}$ into the computational basis of a $n$-qubit system ${|\vct{b} \rangle = |b_1,\ldots,b_n \rangle}$ and use a classical Hamiltonian $\hat{H}$ to encode the objective function, i.e. ${h(\vct{b})=\langle \vct{b}| \hat{H} |\vct{b} \rangle}$. Unconstrained minimization of the objective function $h(\vct{b})$ then boils down to preparing the ground state $|\psi \rangle$ of the classical $n$-qubit Hamiltonian $\hat{H}$ and measuring it in the computational basis to retrieve a classical binary vector that is optimal. Adiabatic quantum optimization achieves this ground state preparation by first preparing the ground state of a simple quantum Hamiltonian (e.g. $\hat{H}_0=\sum_k X_k$) and then adiabatically changing the system's Hamiltonian from $\hat{H}_0$ to the problem Hamiltonian $\hat{H}$~\cite{Farhi_2000, Farhi_2001, Albash_2018}. This also transforms the ground state of the simple Hamiltonian into the ground state of the problem Hamiltonian, provided that this transition happens slowly enough (adiabatic theorem~\cite{Albash_2018, Duan_2020}). The celebrated Quantum Approximate Optimization Algorithm (QAOA)~\cite{Farhi_2014} can be viewed as a Trotterized~\cite{Trotter_1959, Nielsen_2010} version of this adiabatic transformation, which lends itself to implementation on Noisy Intermediate Scale Quantum (NISQ~\cite{Preskill_2018}) devices~\cite{Farhi_2019, Zhou_2020, Blekos_2024}. 

These quantum strategies for binary optimization are typically applied to unconstrained optimization problems with polynomial degree-$k$ objective functions that can readily be recast as a $k$-local classical Hamiltonian on $n$ qubits~\cite{Lucas_2014, Glover_2022, Jun_2023}. Unconstrained Quadratic Binary Optimization (QUBO) problems ($k=2$), and MAX-CUT in particular, fall into this category~\cite{Wang_2018, Farhi_2025}. The presence of constraints, however, thwarts this elegant underlying correspondence. Equality constraints $f(\vct{b})=0$ can, at least to some extent, still be handled by regularizing the original objective function $h(\vct{b})$ with a penalty term for constraint violation: $h(\vct{b}) \mapsto h (\vct{b}) + \gamma (f(\vct{b}))^2$~\cite{Lucas_2014, Alessandroni_2025}. 
The optimal solution of this new (unconstrained) objective function is equal to the optimal solution of the equality-constrained original problem, provided that the regularization parameter $\gamma$ is suitably chosen~\cite{Alessandroni_2025}.

A moment of thought, however, reveals that such a regularization strategy does not work for inequality constraints $g(\vct{b}) \geq 0$. 
As a result, multiple methods have been developed to handle inequality constraints. The most common one is to recast the inequality and an equality using additional slack variables, which, conversely, require auxiliary qubits~\cite{Lucas_2014, Glover_2022, Alessandroni_2025, Yarkoni_2022, Sharma_2025} or qudits~\cite{Bottarelli_2025}. To circumvent this, a plethora of algorithms have been proposed based on the augmented Lagrangian method~\cite{Djidjev_2023},
custom penalty functions~\cite{Montanez-Barrera_2024, Kanatbekova_2025, Lee_2025}
or the subgradient method~\cite{Takabayashi_2025}.
Within QAOA, there are also suggestions based on
quantum projection~\cite{Herman_2023, Bucher_2025_If, Bucher_2025_Efficient, Shirai_2025} or classical feasibility steering~\cite{Diez-Valle_2023, Bako_2025} in the minimization process, akin to methods to handle equality constraints~\cite{Hadfield_2019, Fuchs_2022}. Albeit promising, these methods have been developed for specific problem settings, use additional qubits or classical/quantum subroutines during optimization, or lack rigorous guarantees.

In this manuscript, we close this gap and present the first rigorous framework to handle inequality constraints without the use of auxiliary systems, additional optimization variables, or restrictions to specific tasks or solvers. In addition, this framework provides convergence guarantees and is readily employable in adiabatic quantum binary optimization and its variants, such as QAOA.
We achieve this by first recasting inequality constraints as a non-analytic regularization of the objective function: ${h(\vct{b}) \mapsto h (\vct{b}) + \gamma \max\left\{ 0,-g(\vct{b})\right\}}$. This works because ${g(\vct{b}) \geq 0}$ if and only if ${\max\{0,-g(\vct{b})\}=0}$ (famously known as ReLU function). Subsequently, we recast this non-analytic regularized objective function as the maximum $\max \left\{ h_1 (\vct{b}), h_2 (\vct{b})\right\}$ of two benign cost functions:
${h_1 (\vct{b})=h(\vct{b})}$ and ${h_2 (\vct{b})=h (\vct{b})-\gamma g(\vct{b})}$.
This reduces the task of inequality-constrained binary optimization to an instance of unconstrained multi-objective optimization~\footnote{In this letter, multi-objective optimization is defined as a process of optimization over a maximum of multiple objectives. Accordingly, it is a methodology for min-max tasks, and should not be confused with Pareto optimization. The latter is also referred to as multi-objective optimization, but focuses on finding Pareto fronts~\cite{Miettinen_1998, Kotil_2025}.}. 

The main body of this manuscript takes it from there and introduces and analyzes Multi-Objective Quantum Approximation, or \emph{MOQA} for short.
 The key idea is to approximate $\hat{H}_{\max} = \max\{ \hat{H}_{1},\ldots,\hat{H}_{M} \}$ by the sum of their $p$-th powers $\hat{H}_{(p)} = \sum_m \hat{H}_m^p$
 ~\footnote{In our context, taking a matrix to a power is motivated by $\ell_p$-norm. However, analogies can be drawn to the power method~\cite{Golub_2013} or imaginary time evolution~\cite{Motta_2020}.}. 
 If the original Hamiltonians are of Ising-type, $k$-local, and each contains (at most) $T$ terms, then this new Hamiltonian is $pk$-local and contains (at most) $T^{p}$ terms. Consequently, in order to circumvent a potentially substantial amount of additional qubits, the primary overhead is shifted to the locality of gates. This may be a native application for platforms that are well positioned to perform these multi-qubit operations, such as neutral atoms~\cite{Levine_2019, Mohan_2025, Kazemi_2025} or trapped ions~\cite{Cohen_2015, Lu_2019, Katz_2023}.
 
The plot in Figure~\ref{fig: main} shows that increasing $p$ indeed yields better and better approximations of the original multi-objective cost function. We further support this observation with a rigorous approximation guarantee in Theorem~\ref{thm: theorem} below. By exploiting fundamental relations between different $\ell_p$-norms, we prove that larger $p$s must yield better approximations across the entire eigenvalue spectrum. This, in turn, allows us to translate a promise in eigenvalue separation -- which is essential for every tractable adiabatic quantum state preparation procedure -- into a promise of accurate degree-$p$ approximation of the true multi-objective ground state. In other words: if a multi-objective binary optimization problem is tractable for quantum adiabatic optimization and its variants (in the sense that the smallest eigenvalue of $\hat{H}_{\max}$ is reasonably separated), then a degree-$p$ approximation $\hat{H}_{(p)}$ with $p=\mathrm{polylog}(n)$ already isolates the same ground state and maintains a comparable eigenvalue separation to facilitate adiabatic ground state preparation, or Trotterized versions thereof.

\section{Framework and Theoretical Underpinning}

We consider multi-objective binary optimization problems of the form
\begin{align*}
\underset{\vct{b} \in \left\{ 0, 1\right\}^n}{\text{minimize}} \quad
h_{\max}(\vct{b})=\max \left\{ h_1 (\vct{b}),\ldots,h_M(\vct{b})\right\}.
\end{align*}
We also assume that each $h_m (\vct{b})$ denotes an objective function that can be recast as a $k$-local Hamiltonian $\hat{H}_m$ on $n$ qubits. Note that -- with the inequality constraint regularization trick from above -- many important and challenging binary optimization problems fall into this class, e.g. \ knapsack, vertex cover, and traveling salesman (Miller–Tucker–Zemlin formulation~\cite{Miller_1960})~\cite{Lucas_2014}.
Without loss~\footnote{A joint additive shift $h_m (\vct{b}) \mapsto h_m (\vct{b})+c$ for all objective functions can ensure non-negativity without affecting the min-max solution over all $\vct{b}$s.}, we also assume that each objective only assumes non-negative values, i.e.\ $h_m (\vct{b}) \geq 0$ for all $\vct{b} \in \left\{ 0,1\right\}^n$ and $1 \leq m \leq M$. 
This non-negativity then allows us to raise $h_{\max}(\vct{b})$ to an integer power $p$ without changing the relative ordering of function values: $h_m (\vct{b}) \leq h_m (\vct{c})$ iff $h_m^p (\vct{b}) \leq h_m^p(\vct{c})$. In a second step, we use this to approximate the pointwise maximum with a degree-$p$ polynomial:
\begin{equation}
h_{\max}(\vct{b})^p = \max \left\{ h_1^p (\vct{b}),\ldots,h_M^p(\vct{b})\right\} \approx \sum_{m=1}^M h_m^p (\vct{b}). \label{eq:approximation}
\end{equation}
The following rigorous mathematical result justifies the approximation symbol ''$\approx$'' in this display. 

\begin{proposition} \label{prop: approx}
For each $p \in \mathbb{N}_+$, the
 following sandwich inequality is true for all binary vectors $\vct{b} \in \left\{ 0, 1\right\}^n$:
\begin{align}
M^{-1/p} \left( h_{(p)} (\vct{b})\right)^{1/p} \leq h_{\max}(\vct{b}) \leq \left( h_{(p)} (\vct{b})\right)^{1/p}.\nonumber
\end{align}
\end{proposition}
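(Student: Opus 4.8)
The plan is to recognize the claimed sandwich as the elementary equivalence between the $\ell_\infty$- and $\ell_p$-norms of the nonnegative vector $(h_1(\vct{b}),\ldots,h_M(\vct{b}))$, evaluated at a fixed binary string. Throughout I fix $\vct{b}$ and abbreviate $a_m = h_m(\vct{b}) \geq 0$, where nonnegativity is exactly the without-loss-of-generality shift assumed in the main text. The one structural fact I would isolate first is that, because $x \mapsto x^p$ is strictly increasing on $[0,\infty)$, raising to the $p$-th power commutes with the maximum of nonnegative numbers, so $h_{\max}(\vct{b})^p = (\max_m a_m)^p = \max_m a_m^p$. This is what lets me move freely between $h_{\max}$ and the $p$-th powers that define $h_{(p)}$ in Eq.~\eqref{eq:approximation}.

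Next I would establish the two one-line bounds for nonnegative reals. Since every summand is nonnegative, a single term never exceeds the whole sum, giving $\max_m a_m^p \leq \sum_{m=1}^M a_m^p = h_{(p)}(\vct{b})$; conversely, each of the $M$ summands is dominated by the largest, giving $h_{(p)}(\vct{b}) = \sum_{m=1}^M a_m^p \leq M \max_m a_m^p$. Combining these with the commutation identity above yields the chain $h_{\max}(\vct{b})^p \leq h_{(p)}(\vct{b}) \leq M\, h_{\max}(\vct{b})^p$.

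Finally I would take $p$-th roots throughout. Because all three quantities are nonnegative and $x \mapsto x^{1/p}$ is monotone, the inequalities are preserved, producing $h_{\max}(\vct{b}) \leq (h_{(p)}(\vct{b}))^{1/p} \leq M^{1/p} h_{\max}(\vct{b})$. The left inequality is already one half of the claim, and the right inequality rearranges to $M^{-1/p}(h_{(p)}(\vct{b}))^{1/p} \leq h_{\max}(\vct{b})$, which is the other half; together they are exactly the stated sandwich.

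I do not anticipate a genuine obstacle, as the result is a direct consequence of norm equivalence; the only points requiring care are bookkeeping ones. Concretely, I must invoke nonnegativity twice -- once to justify $h_{\max}^p = \max_m h_m^p$ and once to ensure that taking $p$-th roots is order-preserving -- and I should confirm that the relevant scalar quantity is the unnormalized sum $\sum_m h_m^p$ appearing in Eq.~\eqref{eq:approximation}, so that the factor $M^{1/p}$ (rather than its reciprocal) lands on the correct side. Since every step holds for an arbitrary fixed $\vct{b}$, the bound is automatically pointwise and no further quantification is needed.
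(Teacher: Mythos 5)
Your proof is correct and follows essentially the same route as the paper: both fix $\vct{b}$, identify $h_{\max}(\vct{b})$ and $h_{(p)}(\vct{b})$ with the $\ell_\infty$-norm and the $p$-th power of the $\ell_p$-norm of the nonnegative vector $(h_1(\vct{b}),\ldots,h_M(\vct{b}))$, and invoke the equivalence $M^{-1/p}\|\vct{u}\|_p \leq \|\vct{u}\|_\infty \leq \|\vct{u}\|_p$ -- the only difference being that the paper cites this norm relation while you prove it inline via $\max_m a_m^p \leq \sum_m a_m^p \leq M \max_m a_m^p$. Your bookkeeping remark about the unnormalized sum is also well taken, since the proposition indeed uses $h_{(p)}(\vct{b})=\sum_{m=1}^M h_m^p(\vct{b})$ as in Eq.~\eqref{eq:approximation}, despite the factor $1/M$ appearing in the introductory definition of $\hat{H}_{(p)}$.
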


\begin{proof}
Recall that for $p \in \mathbb{N}_+$, the $\ell_p$-norm of a vector $\vct{u} \in \mathbb{R}^M$ is defined as $\| \vct{u} \|_p = \big(\sum_{m=1}^M |u|_m^p\big)^{1/p}$.
For any fixed binary input vector $\vct{b}$, we define the $M$-dimensional vector $\vct{u}$ entrywise by setting $u_m = h_m (\vct{b})$. Our non-negativity assumption ensures that we can identify the following $\ell_p$-norm reformulations of the objects of interest: $h_{\max}(\vct{b})
= \max \left\{ u_1,\ldots,u_M \right\}=\|\vct{u} \|_\infty$, as well as $h_{(p)} (\vct{b})
= \sum_{m=1}^M u_m ^{p} =\left\| \vct{u} \right\|_{p}^{p}$. The claim now follows from a well-known $\ell_p$-norm relation in $\mathbb{R}^M$, see e.g.~\cite[Eq.~(A.3)]{Foucart_2013}:
$
M^{-1/p} \left\| \vct{u} \right\|_p \leq \left\| \vct{u} \right\|_\infty \leq \left\| \vct{u} \right\|_p.\nonumber
$
\end{proof}

We emphasize that this sandwich inequality simultaneously ranges over all $2^n$ different inputs $\vct{b} \in \left\{0,1\right\}^n$, while the approximation accuracy is completely independent of $n$. In fact, choosing $p$ proportional to $\log_2 (M)$ ensures that upper and lower bounds 
match up to a multiplicative constant that can be made arbitrarily small~\footnote{Rewrite $M^{-1/p}$ as $2^{-\log_2(M)/p}$ to observe that setting $p= \delta \log_2 (M)$ with $\delta >0$ implies $M^{-1/p}=1/2^{1/\delta}$.}. 

This rigorous multiplicative approximation gains additional traction when considering the task of mapping functions of multiple objectives to tractable $n$-qubit Hamiltonians. 
This framework now assumes that each of these objectives $h_m (\vct{b})$ can be individually mapped to a classical $k$-local Hamiltonian $\hat{H}_m$ with (at most) $T \leq n^k$ terms. By taking them to a power $\hat{H}_{(p)}=\sum_{m=1}^M \hat{H}_m^p$ following the r.h.s.\ of Eq.~\eqref{eq:approximation}, the locality increases linearly in $p$ from $k$ to $kp$ and the number of terms grows polynomially in $p$ to (at most) order $n^{kp}$ terms. For the particular case of QUBO ($k=2$), this simplifies to a \mbox{$2p$-local} Hamiltonian with (at most) $n^{2p}$ terms. We refer to the companion paper for a thorough analysis and explicit scaling bounds in that case~\cite{Egginger_2025}.

For now, it is enough to appreciate that, at least for modest values of $p$, our Hamiltonian approximation preserves desirable features of the original Hamiltonians, most notably (quasi-)locality and a (quasi-)polynomial number of terms. This is remarkable, because the same cannot be said at all about an exact Hamiltonian realization of the maximum: a diagonal $\hat{H}_{\max}$ that obeys $\langle \vct{b}| \hat{H}_{\max} |\vct{b} \rangle = \max \big\{ \langle \vct{b}|\hat{H}_1| \vct{b} \rangle,\ldots,\langle \vct{b}| \hat{H}_M |\vct{b} \rangle \big\}$ exists, but is very difficult and expensive to construct and realize. In particular, there is no reason to believe that $\hat{H}_{\max}$ inherits desirable structure from the individual Hamiltonians involved. 

Viewed from this angle, Proposition~\ref{prop: approx} allows us to relate all eigenvalues of an interesting but intractable diagonal Hamiltonian $\hat{H}_{\max}$ to all eigenvalues of an entire hierarchy of tractable approximations $\hat{H}_{(p)}$ with $p \in \mathbb{N}_+$: 
\begin{equation}
M^{-1/p} \langle \vct{b}| \hat{H}_{(p)}| \vct{b} \rangle^{1/p} \leq \langle \vct{b}| \hat{H}_{\max}| \vct{b} \rangle \leq \langle \vct{b}| \hat{H}_{(p)}| \vct{b} \rangle^{1/p}. \label{eq:eigenvalue-sandwich}
\end{equation}
for all (eigen-)states $|\vct{b}\rangle$ with $\vct{b} \in \left\{0,1\right\}^n$.
These approximations become more accurate, but also more expensive, as $p$ increases.

We have now prepared the stage for deriving the main theoretical result of our work. Note that the target $\hat{H}_{\max}$ and its approximation $\hat{H}_{(p)}$ share the same set of eigenvectors (i.e. the computational basis states), as both are diagonal (classical) operators. The approximation, therefore, is only with regard to the eigenvalues, which are bounded by Eq.~\eqref{eq:eigenvalue-sandwich} individually.
Thus, this approximation holds for all eigenvalue-eigenvector pairs, including the ground state pair $(\lambda_{1}, | \vct{b}_{1} \rangle)$ and any excited state pair $(\lambda_\perp, |\vct{b}_\perp \rangle)$ of $\hat{H}_{\max}$. 
It is reasonable to expect that the approximation accuracy needed to resolve the ground state from any excited state depends on the separation between the smallest eigenvalue $\lambda_{1}$ and the first excited level $\lambda_{2}$.
The (additive) spectral gap $\Delta = \lambda_2 - \lambda_{1}$ is often used to quantify such a separation~\cite{Albash_2018}. Rather than additive, our approximation guarantee (Eq.~\eqref{eq:eigenvalue-sandwich}) is multiplicative in nature. Hence, we quantify eigenvalue separation by the \emph{spectral gap ratio} $r(\hat{H})=(\lambda_2-\lambda_1)/\lambda_1$. This quantity is dimensionless and also scale-invariant -- a feature shared with the original multi-objective optimization problem. Likewise, the canonical spectral gap $\Delta$ is often divided by the operator norm of the Hamiltonian, aligning it with the measure we are using~\cite{Albash_2018, Duan_2020}. The main theoretical underpinning of \emph{MOQA} now states that a not-too-small spectral gap ratio of $\hat{H}_{\max}$ directly translates into an exact recovery of the ground state by approximation $\hat{H}_{(p)}$ for a $p$ that is not too large. 

\begin{theorem}\label{thm: theorem}
Let $\hat{H}_{\max}$ be the Hamiltonian associated with a maximum of $M$ objectives $h_m(\vct{b})$. Further, let the ground space be non-degenerate and $r (\hat{H}_{\max})$ be its spectral gap ratio. Then, choosing approximation level $p> \log (M)/\log(r(\hat{H}_{\max})+1)$ ensures that the Hamiltonian $\hat{H}_{(p)}$ has exactly the same ground space as $\hat{H}_{\max}$. Increasing this threshold by 1 additionally guarantees a larger spectral gap ratio. 
\end{theorem}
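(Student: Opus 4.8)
The plan is to run the entire argument at the level of the eigenvalue sandwich in Eq.~\eqref{eq:eigenvalue-sandwich}, rewritten in ``$p$-th power'' form. Since $\hat{H}_{\max}$ and $\hat{H}_{(p)}$ are simultaneously diagonalized by the computational basis, the only thing that changes between them is the eigenvalue assigned to each $\vct{b}$, so I never need to track eigenvectors. First I would raise the sandwich to the $p$-th power (monotone on nonnegatives) to obtain, for every $\vct{b}$, the two-sided bound $h_{\max}(\vct{b})^p \leq h_{(p)}(\vct{b}) \leq M\,h_{\max}(\vct{b})^p$. I would fix the unique ground configuration $\vct{b}_1$ of $\hat{H}_{\max}$ with $h_{\max}(\vct{b}_1)=\lambda_1$, record that non-degeneracy forces $h_{\max}(\vct{b})\geq\lambda_2$ for every $\vct{b}\neq\vct{b}_1$, and write $\mu(\vct{b})=h_{(p)}(\vct{b})$ for the eigenvalues of $\hat{H}_{(p)}$. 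The single algebraic identity that drives everything is $r(\hat{H}_{\max})+1=\lambda_2/\lambda_1$, which converts the additive threshold into a multiplicative one.

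For the ground-space claim I would compare the two ends of the sandwich at $\vct{b}_1$ versus any competitor $\vct{b}\neq\vct{b}_1$: $\mu(\vct{b}_1)\leq M\lambda_1^p$ while $\mu(\vct{b})\geq h_{\max}(\vct{b})^p\geq\lambda_2^p$. The hypothesis $p>\log(M)/\log(r+1)$ is exactly equivalent to $(r+1)^p>M$, i.e.\ $\lambda_2^p>M\lambda_1^p$. Chaining these gives $\mu(\vct{b}_1)<\mu(\vct{b})$ for all $\vct{b}\neq\vct{b}_1$, so $\vct{b}_1$ remains the strict, unique minimizer and the ground space is preserved and still non-degenerate. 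This half is immediate.

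For the gap-ratio claim I would reuse the same extremal bounds: the new ground eigenvalue satisfies $\mu_1\leq M\lambda_1^p$ and the new first-excited eigenvalue satisfies $\mu_2=\min_{\vct{b}\neq\vct{b}_1}\mu(\vct{b})\geq\lambda_2^p$, so that $r(\hat{H}_{(p)})=\mu_2/\mu_1-1\geq (r+1)^p/M-1$. I would then try to show this lower bound exceeds $r(\hat{H}_{\max})=r$, which is the whole content of ``larger spectral gap ratio.''

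The hard part will be exactly this last comparison, and I expect it to absorb most of the work. The bound $\mu_1\leq M\lambda_1^p$ is saturated precisely when all $M$ objectives coincide at their maximum on $\vct{b}_1$, and this worst-case factor of $M$ in the denominator is what the growth of $(r+1)^p$ in the numerator must overcome. The clean sufficient condition $(r+1)^p/M-1\geq r$ rearranges to $(r+1)^{p-1}\geq M$, which is genuinely stronger than the ground-space threshold $(r+1)^p>M$ by one unit in the exponent. So the delicate issue is whether the stated threshold by itself already controls the gap ratio, or whether one must sharpen the estimate on $\mu_1$ -- for instance by quantifying how far the subdominant objectives at $\vct{b}_1$ sit below $\lambda_1$, thereby replacing the crude factor $M$ by something strictly smaller -- in order to close the gap between the two thresholds. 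I would concentrate the effort there; the remaining steps are routine manipulations of the sandwich.
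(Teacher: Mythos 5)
Your first half is complete and coincides exactly with the paper's argument: bound $\mu(\vct{b}_1)\leq M\lambda_1^p$ from the left side of the sandwich, bound $\mu(\vct{b})\geq h_{\max}(\vct{b})^p\geq\lambda_2^p$ for every $\vct{b}\neq\vct{b}_1$ from the right side, and observe that $p>\log(M)/\log(r+1)$ is equivalent to $\lambda_2^p>M\lambda_1^p$. The genuine gap in your proposal is the second claim, which you explicitly left open --- but your diagnosis of where the difficulty sits is in fact sharper than the paper's own treatment. The paper closes this step in Eq.~\eqref{eq:ratio_proof} via the chain $\frac{\nu_2}{\nu_1}-1\geq\frac{\lambda_2^p}{M\lambda_1^p}-1\geq\bigl(\frac{\lambda_2}{\lambda_1}\bigr)^p-1\geq\frac{\lambda_2}{\lambda_1}-1$, whose middle inequality asserts $\frac{1}{M}x^p\geq x^p$ with $x=\lambda_2/\lambda_1>1$; that is false for every $M\geq2$. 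So the paper never actually bridges the two thresholds either, and the clean sufficient condition is precisely the one you extracted: $(r+1)^{p-1}\geq M$, i.e.\ one unit of $p$ beyond the threshold in Eq.~\eqref{eq:aux-p}.

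Moreover, your fallback plan --- sharpening the estimate on $\mu_1$ by replacing the factor $M$ with something smaller --- cannot rescue the stated threshold in general, because the crude bounds are saturated by admissible instances; at the stated threshold the gap-ratio claim is false, not merely unproven. Take $M=2$ with $h_1(\vct{b}_g)=h_2(\vct{b}_g)=1$, $h_1(\vct{b}_e)=1.9$, $h_2(\vct{b}_e)=0$, and both objectives equal to (say) $100$ on every other configuration; all objectives are nonnegative and the minimum of $h_{\max}$ is unique. Then $\lambda_1=1$, $\lambda_2=1.9$, $r(\hat{H}_{\max})=0.9$, and the threshold is $p_0=\log 2/\log 1.9\approx 1.08$, so $p=2$ satisfies the theorem's hypothesis. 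The ground space is preserved, exactly as your first half guarantees ($h_{(2)}(\vct{b}_g)=2<3.61=h_{(2)}(\vct{b}_e)$), yet $r(\hat{H}_{(2)})=\frac{3.61}{2}-1=0.805<0.9=r(\hat{H}_{\max})$: the spectral gap ratio strictly decreases. Hence the ``larger spectral gap ratio'' conclusion genuinely requires the strengthened hypothesis $(r+1)^{p-1}>M$ (with equality one only gets an equal ratio), and the correct completion of your proof is to adopt that stronger threshold as an assumption rather than to search for a sharper bound on $\mu_1$.
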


We refer to the \emph{End Matter} for a more detailed exposition and a formal proof. We also note that the first part readily extends to problems with degenerate optima, but the second one may not. We do, however, believe that \emph{MOQA} is equally applicable to problems with degenerate minima as the separation between (non-)ideal solutions remains.

For now, we emphasize that spectral gaps comprise a very meaningful and prominent assumption in quantum binary optimization~\cite{Farhi_2000, Albash_2018}.
The adiabatic theorem, for instance, states that the evolution time required to adiabatically turn a trivial starting Hamiltonian into a complicated classical Hamiltonian while ensuring that the system remains in the ground state scales proportionally to the inverse-squared of the (additive) spectral gap~\cite{Albash_2018}. So, ground states of well-separated Hamiltonians can be prepared (relatively) quickly, while an exponentially small spectral gap may require exponentially long adiabatic evolutions that fail to provide any computational speedup over exponentially expensive classical optimization techniques (e.g. brute-force search).

So from this perspective, Theorem~\ref{thm: theorem} can be paraphrased as follows: assuming that the original multi-objective binary optimization problem is in principle tractable on a quantum computer -- in the sense that the spectral gap ratio $\hat{r}(\hat{H}_{\max})$ is not too close to zero -- then we can also approximate the multi-objective Hamiltonian $\hat{H}_{\max}$ by a low-degree polynomial approximation $\hat{H}_{(p)}=\sum_{m=1}^M \hat{H}_m^p$ that has exactly the same ground space and preserves advantageous structure, such as locality and term number (at least to some degree). 

A constant spectral gap ratio (w.r.t. $n$) and a constant number of terms translate, for instance, into a Hamiltonian $\hat{H}_{\mathrm{const}}$ with constant locality and a polynomial number of terms. 
Likewise, a polylogarithmically shrinking spectral gap ratio, a polylogarithmic locality, and a polynomial number of terms translate into a Hamiltonian $\hat{H}_{\mathrm{polylog}(n)}$ with polylogarithmic locality and a quasi-polynomial number of terms. In the regime of large problem sizes $n \gg 1$, this is still a favorable tradeoff. 

\section{Numerical Evaluations}

Let us now shift focus and complement our theoretical underpinning with empirical performance studies. Figure~\ref{fig: main} analyzes generic QUBOs in $n=6$ variables with a single linear inequality constraint $\vct{a}^\transp \vct{b} \geq 0$. The regularized version of this problem can be recast as a maximum of $M=2$ QUBOs, and generic means that we sample all vector and matrix entries independently from a standard Gaussian distribution.

The thick red line in Figure~\ref{fig: main} plots the value of the cost function against all $2^6=64$ different configurations. Different shades of blue then plot tractable \emph{MOQA} approximations to this objective function. We see that the quality of the approximation increases with $p$. And it does so globally across the entire domain of possible binary input vectors ($x$-axis). This precisely captures the insight that increasing $p$ yields increasingly accurate approximations of the entire optimization landscape (Proposition~\ref{prop: approx}). 
\begin{figure}
  \begin{centering}
  \includegraphics[width=\linewidth]{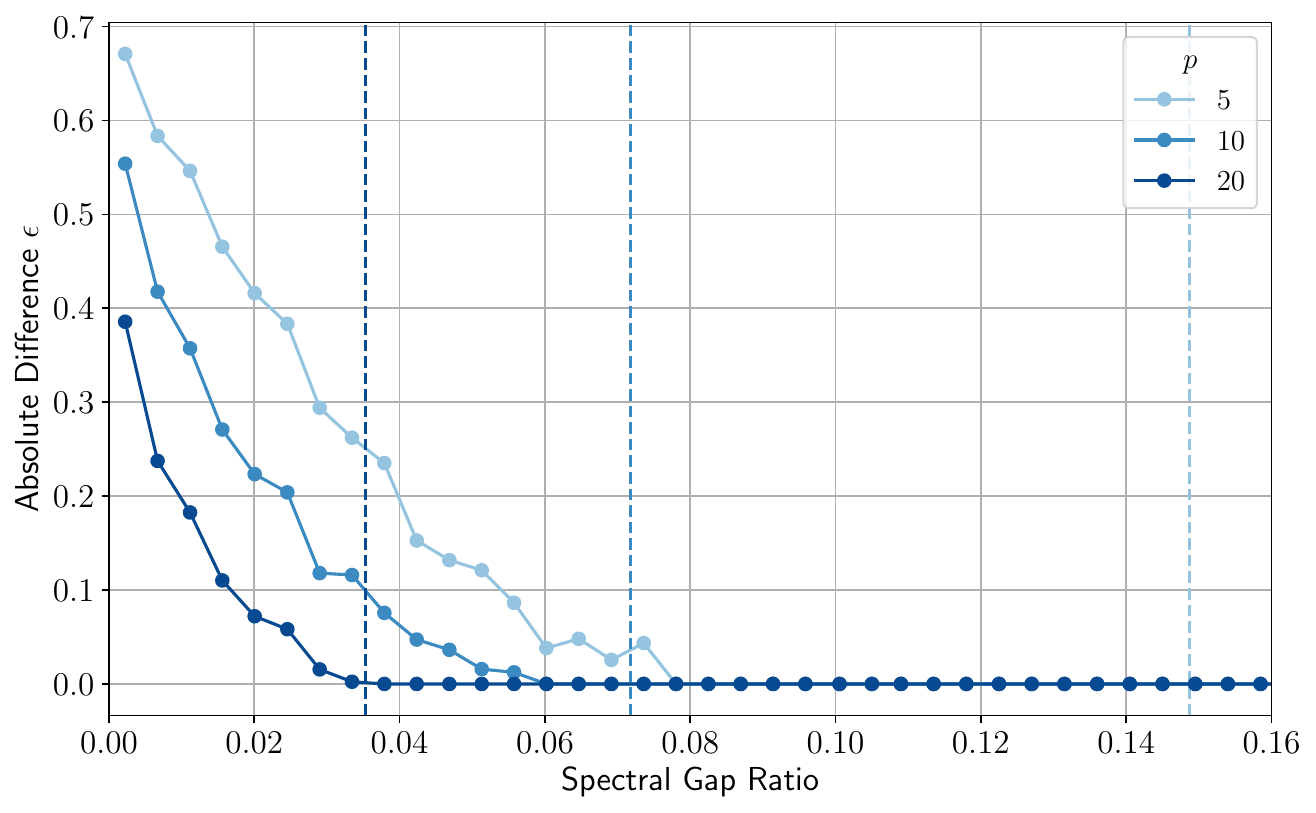}
  \end{centering}
  \caption{\emph{Approximation error over spectral gap ratio.}\\
  A total of $N_s=10000$ random QUBOs in $n=16$ variables combined with a linear inequality constraint ($\gamma=6$) are analyzed. They are collected into bins based on the resulting spectral gap ratio $r (\hat{H}_{\max})$. For the three approximation levels $p\in\{5,10,20\}$, the absolute difference $\epsilon$ is shown per bin. In addition, the corresponding vertical lines indicate the threshold, after which Theorem~\ref{thm: theorem} guarantees that this error is 0.}
  \label{fig:thm1}
\end{figure}

However, choosing an approximation level $p$ below the theoretical threshold from Theorem~\ref{thm: theorem} does not mean that $\hat{H}_{(p)}$ has a very wrong ground state. In fact, our empirical studies show that \emph{MOQA} still produces reliable results even when we choose a $p$ below the threshold or don't have any known guarantees to begin with. 
To support this claim, we sample $N_s=10000$ random instances of the described problem with varying sizes and compare the minima of $\hat{H}_{\max}$ and $\hat{H}_{(p)}$ up to $p=20$. In most of these cases, Theorem~\ref{thm: theorem} would require $p>20$.

These simulations are performed classically, relying on $h(\vct{b})$ rather than $\hat{H}$, which does not affect their informative value for those Hamiltonians.
Our comparison is based on two error metrics for the minima ${\vct{b}^{*}=\text{argmin}(h(\vct{b}))}$. Firstly, we use a 0-1 type error, based on the (average) absolute difference:
\begin{equation*}
\epsilon= \frac{1}{N_s}\sum_{i=1}^{N_s}\begin{cases}
        \text{1 if }h_{\max}^{(i)}(\vct{b}_{(p)}^{*(i)}) \neq h_{\max}^{(i)}(\vct{b}_{\max}^{*(i)})\\
        \text{0 else }.
\end{cases}
\end{equation*}
Second, we use the (average) relative difference of the minima:
\begin{equation*}
    \delta = \frac{1}{N_s}\sum_{i=1}^{N_s}\frac{h_{\max}(\vct{b}_{(p)}^{*(i)})-h_{\max}(\vct{b}_{\max}^{*(i)})}{h_{\max}(\vct{b}_{\max}^{*(i)})}.
\end{equation*}
While $\epsilon$ is relevant for comparison with classical approaches that are non-heuristic, $\delta$ conceptually aligns better with prevalent performance metrics for quantum optimization algorithms, like QAOA. 
Figure~\ref{fig: main} displays the obtained errors for different problem sizes $n$ as a function of $p$. Based on $\epsilon$, it can be seen that with increasing $n$ and for small $p$, our approach is unlikely to find the global optimum, as there are $2^n$ many options that can be arbitrarily close. However, based on $\delta$, we conclude that the minima of $\hat{H}_{(p)}$ are also very good solutions to $\hat{H}_{\max}$ as relative differences of $<1\%$ seem to already be achievable with very small $p$. This likely also extends beyond $n=20$, since the differences across sizes are decreasing.
In general, tasks with inequality constraints seem very approachable with \emph{MOQA}, as, among our choices, not a single solution for any $p$ among the 10000 instances violated the constraint.

This analysis averages over various spectral gap ratios, most of which are below the threshold of Theorem~\ref{thm: theorem}. This reflects that in practice, we often do not know this ratio or are primarily interested in performance. To numerically analyze how the error behaves as the approximation level approaches the threshold, we present Figure~\ref{fig:thm1}. There, we bring the spectral gap ratios closer to the threshold by reducing $\gamma$ from 120 to 6. Then, we examine the absolute difference $\epsilon$ and observe that it decreases steadily until it eventually becomes 0 already before the threshold. Typically, small $p$, large $n$, and large $M$ increase this distance between empirical 0 and analytically guaranteed 0.

For further details on the numerical studies and additional applications, like multiple constraints or partitioning problems, we refer to the companion paper~\cite{Egginger_2025}.

\section{Synopsis}

We consider the problem of mapping a multi-objective objective function $h_{\max} (\vct{b})=\max \left\{ h_1 (\vct{b}),\ldots,h_M (\vct{b}) \right\}$ on $n$ bits $\vct{b} \in \left\{0,1\right\}^n$ to a diagonal $n$-qubit Hamiltonian that approximately preserves the original optimization landscape and still inherits advantageous structure from the individual constituents. 
We provide both theoretical and empirical arguments that a sum of $p$-th powers $\hat{H}_{(p)}=\sum_{m=1}^M \hat{H}_m^p$ achieves this task, and the approximation accuracy improves with increasing $p$. If the original objective functions can be realized by $k$-local Hamiltonians $\hat{H}_m$ comprised of $T\leq n^k$ terms, then $\hat{H}_m^p$ is $pk$-local and contains at most (order) $n^{kp}$ terms. Our main technical result states that choosing $p$ logarithmically in the number of objectives $M$ and inverse logarithmically in the original spectral gap ratio is enough to ensure exact recovery of the ground space of $\hat{H}_{\max}$. This (poly-)logarithmic growth in complexity ensures a theoretically efficient resource scaling of such a Hamiltonian approximation in the regime of very large problem sizes $n \gg 1$.

We complement these theoretical findings with empirical studies for generic problems of size ($4 \leq n \leq 20$), where asymptotic scaling results do not yet apply, and our theoretical results yield prohibitively large values of $p$.
Remarkably, in this regime, small values around $p\approx 4$ already yield very good approximate performance, with typically $<10\%$ relative error~\cite {Egginger_2025}.

We intentionally provide only the Hamiltonians $\hat{H}_{(p)}$ and do not (yet) combine them with existing methods to find their ground states, such as QAOA. We do this because we consider the entire pipeline from problem to solution as a modular approach, with \emph{MOQA} as one essential part that enables a plethora of quantum meta-algorithms to extend to a broader range of applications.

\section*{Code Availability}
A tutorial for reproducing the numerical experiments is made available in the following Github repository:

\url{https://github.com/SebastianEgginger/MOQA}~\cite{MOQA_repo}

\section*{Acknowledgments}
We want to thank Frank Leymann, Alexander Mandl, and Raimel Medina Ramos for inspiring discussions. Johannes Kofler and Sergi Ramos Calderer provided helpful feedback on the paper draft. This research was funded by the Austrian Research Promotion Agency (FFG) within the COMET module Quantum Algorithm Engineering (FFG grant no. 923923) and via the QuantumReady project (FFG 896217) within Quantum Austria. The QUICK-team is also supported by the Austrian Science Fund (FWF) via the SFB BeyondC (10.55776/FG7) and the European Research Council (ERC) via the Starting grant q-shadows (101117138).

\bibliographystyle{apsrev4-2}
\bibliography{references}

\begin{appendix}
\section{End Matter: Proof and Details to Theorem~\ref{thm: theorem}}
Let us start with the non-degenerate case and prove Theorem~\ref{thm: theorem}, before elaborating how this Theorem may extend to degenerate minima.
\begin{proof}[Proof of Theorem~\ref{thm: theorem}]
Let us assume that there is exactly one binary vector $\vct{b}^*=\vct{b}_g$ (the subscript $g$ stands for ''ground state'') that minimizes $h_{\max}(\vct{b})$.
We use the l.h.s. of Proposition~\ref{prop: approx} to bound the objective value of the $p$-th approximation at this input vector via
\begin{align}\label{eq:lower}
    h_{(p)}(\vct{b}_g)\leq M(h_{\max}(\vct{b}_g))^p.
\end{align}
Similarly, we use the r.h.s. of Proposition~\ref{prop: approx} to bound the objective values of these two functions at any other argument $\vct{b}_e \neq \vct{b}_g$ (the subscript ''$e$'' stands for excited state) as
\begin{align}\label{eq:upper}
    h_{(p)}(\vct{b}_e)\geq (h_{\max}(\vct{b}_e))^p.
\end{align}
Now, note that the approximation $h_{(p)}$ has the same (unique) optimal solution $\vct{b}_g$ as $h_{\max}$ if and only if $h_{(p)}(\vct{b}_g) < h_{(p)}(\vct{b}_e)$ for all $\vct{b}_e \neq \vct{b}_g$, or equivalently:
\begin{align*}
\min_{\vct{b}_e \neq \vct{b}_g} \frac{h_{(p)}(\vct{b}_e)}{h_{(p)}(\vct{b}_g)} >1.
\end{align*}
We can now apply Eq.~\eqref{eq:lower} to the denominator and Eq.~\eqref{eq:upper} to the numerator of this ratio to obtain a lower bound ratio that only involves the original objective function:
\begin{align}
\min_{\vct{b}_e \neq \vct{b}_g} \frac{h_{(p)}(\vct{b}_e)}{h_{(p)}(\vct{b}_g)}
 \geq \min_{\vct{b}_e \neq \vct{b}_g}\frac{1}{M} \left( \frac{h_{\max}(\vct{b}_e)}{h_{\max}(\vct{b}_g)}\right)^p. \label{eq:aux-ineq}
\end{align}
To make a connection to the spectral gap ratio of the Hamiltonian $\hat{H}_{\max}$ associated with $h_{\max}(\vct{b})$, we observe $h_{\max}(\vct{b}_g)=\lambda_1$ (ground state energy) and $h_{\max}(\vct{b}_e) \geq \lambda_2$ (first excited state energy) for all $\vct{b}_e \neq \vct{b}_g$. In formulas:
\begin{align*}
\min_{\vct{b}_e \neq \vct{b}_g}\frac{1}{M} \left( \frac{h_{\max}(\vct{b}_e)}{h_{\max}(\vct{b}_g)}\right)^p =
\frac{1}{M} \left( \frac{\lambda_2}{\lambda_1}\right)^p
= \frac{1}{M} \left(r (\hat{H}_{\max})+1\right)^p,
\end{align*}
where $r(\hat{H}_{\max})=(\lambda_2-\lambda_1)/\lambda_1$ denotes the spectral gap ratio. This reformulation tells us that choosing
\begin{equation}
p > p_0= \frac{\log(M)}{ \log \left(r(\hat{H}_{\max})+1\right)}
\label{eq:aux-p}
\end{equation}
is enough to ensure
\begin{align*}
\frac{1}{M} \left( r(\hat{H}_{\max})+1\right)^p >&
\frac{1}{M}\left(r(\hat{H}_{\max})+1\right)^{\frac{\log (M)}{\log( r(\hat{H}_{\max})+1)}} \\
=& \frac{M}{M}=1.
\end{align*}
And, courtesy of Rel.~\eqref{eq:aux-ineq}, this then also ensures that $\min_{\vct{b}_e \neq \vct{b}_g} h_{(p)}(\vct{b}_e)/h_{(p)}(\vct{b}_g)>1$.
And, as explained at the beginning of the proof, this is equivalent to demanding that the approximating function $h_{(p)}(\vct{b})$ has exactly the same (unique) minimum as $h_{\max}(\vct{b})$.

We continue to prove the second statement in Theorem~\ref{thm: theorem}, which states that for all $p$ above this threshold, we find $r(\hat{H}_{(p)})\geq r(\hat{H}_{\max})$.
To prove it, we start from the definition of the spectral gap ratio, where we denote the minimal eigenvalue of $\hat{H}_{(p)}$ with $\nu_{1}$ and the second lowest energy with $\nu_{2}$:
\begin{align}\label{eq:ratio_p}
    r(\hat{H}_{(p)})
    =\frac{\nu_2-\nu_{1}}{\nu_{1}}
    =\frac{\nu_2}{\nu_{1}}-1.
\end{align}
We can now reuse the correspondence between the objective function and eigenvalues to bound $\nu_{1}\leq M\lambda_{1}^p$ using Eq.~\eqref{eq:lower}. In the same way, we bound the energy of the first excited state as $\nu_{2}\geq\lambda_{2}^p$ with Eq.~\eqref{eq:upper}.
Inserting those bounds into the denominator and numerator of Eq.~\eqref{eq:ratio_p}, respectively, yields:
\begin{align*}
    r(\hat{H}_{(p)})&=\frac{\nu_2}{\nu_{1}}-1
    \geq\frac{\lambda_2^p}{M\lambda_{1}^p}-1.
\end{align*}
This enables us to rewrite 
\begin{align*}
    r(\hat{H}_{(p)})\geq r(\hat{H}_{\max})\Leftrightarrow\frac{1}{M}\left(\frac{\lambda_2}{\lambda_{1}}\right)^p-1\geq\frac{\lambda_2}{\lambda_{1}}-1.
\end{align*}
The r.h.s. is fulfilled whenever $\frac{1}{M}\left(\frac{\lambda_2}{\lambda_{1}}\right)^{p-1}\geq1$ or equivalently $p\geq1+\frac{\log(M)}{ \log \left(r(\hat{H}_{\max})+1\right)}$. Thus, we derived that the same condition (increased by 1) that guarantees the preservation of the global minimum (Eq.~\eqref{eq:aux-p}) also guarantees the increased spectral gap ratio.
\end{proof}

We will now discuss the case of degenerate minima.
The essential insights from the proof that we just presented don't change if we move to degenerate objective function $h_{\max}(\vct{b})$ with $L \geq 2$ optimal solutions $b_{g}^{(1)},\ldots,b_{g}^{(L)} \in \left\{0,1\right\}^n$.
If we appropriately redefine the spectral gap ratio to involve the ground state energy and the first excited state energy that is different from the ground state, then setting $p$ according to Eq.~\eqref{eq:aux-p} still ensures that no non-optimal solution $\vct{b}_e$ can achieve a smaller objective function value than any $\vct{b}_g^{(l)}$:
\begin{align*}
\min_{\vct{b}_e \notin \left\{ \vct{b}_g^{(1)},\ldots,\vct{b}_g^{(L)}\right\}}h_{(p)}\left( \vct{b}_e \right) > \max_{1 \leq l \leq L} h_{(p)} \left( \vct{b}_g^{(l)}\right)
\end{align*}
Hence, the $p$-th approximation still separates the ground state space from all excited eigenstates. What can, and typically will, happen though is that the degeneracy among the different ground states is broken. 
So, the optimal solution of the approximation $h_{(p)}$ is promised to be one optimal solution of the original project $h_{\max}$:
\begin{align*}
\mathrm{argmin} \quad h_{(p)} (\vct{b})=\vct{b}_g^{(l)} \quad \text{for one $1 \leq l \leq L$.}
\end{align*}
We visualize this broken degeneracy in Figure~\ref{fig:groundspace}.

\begin{figure}
\centering
\begin{tikzpicture}[x=1cm,y=1cm,>=stealth]
\def\gnd{0}
\def\exc{1.6}
\def\gndA{0.00}
\def\gndB{0.25}
\def\excB{2.40}

\begin{scope}[shift={(0,0)}]
  \draw[gray!35, line cap=round] (-1.3,\gnd) -- (1.3,\gnd);
  \draw[gray!35, line cap=round] (-1.3,\exc) -- (1.3,\exc);
  \draw[line width=1.1pt] (-1.10,\gnd) -- (-0.50,\gnd);
  \draw[line width=1.1pt] (-0.30,\gnd) -- ( 0.30,\gnd);
  \draw[line width=1.1pt] ( 0.80,\exc) -- ( 1.20,\exc);
  \draw[<->, very thick] (1.15,\gnd) -- node[right=2pt, text=red] {$r_{\max}$} (1.15,\exc);
  \node[below] at (-0.15,-0.45) {$\hat{H}_{\max}$};
\end{scope}

\begin{scope}[shift={(5,0)}]
  \draw[gray!35, line cap=round] (-1.6,\gndA) -- (1.6,\gndA);
  \draw[gray!35, line cap=round] (-1.6,\gndB) -- (1.6,\gndB);
  \draw[gray!35, line cap=round] (-1.6,\excB) -- (1.6,\excB);
  \draw[line width=1.1pt] (-1.20,\gndA) -- (-0.60,\gndA);
  \draw[line width=1.1pt] (-0.10,\gndB) -- ( 0.50,\gndB);
  \draw[line width=1.1pt] ( 1.00,\excB) -- ( 1.45,\excB);
  \draw[<->, thick] (-1.35,\gndA) -- node[left=2pt, text=red] {$r_{(p)}$} (-1.35,\gndB);
  \draw[<->, very thick] (1.35,\gndB) -- node[right=2pt, text=red] {$\tilde{r}_{(p)}$} (1.35,\excB);
  \node[below] at (-0.05,-0.45) {$\hat{H}_{(p)}$};
\end{scope}

\end{tikzpicture}

\caption{\emph{(Non-)degenerate ground space.}\\
    On the left, we show a schematic of the ground space of some $\hat{H}_{\max}$ with a degenerate minimum. If we think of the energy ($y$-axis) in units of $\lambda_1$, we can show the spectral gap ratio $r_{\max}$ as a shift between the degenerate minima and the first excited state.
    On the right, we show the schematic ground space of $\hat{H}_{(p)}$. The approximation will likely break that symmetry, leaving only one global minimum. Thus, we distinguish between the gap ratio $r_{(p)}$ between the levels that are ideal solutions to the original problem and the gap ratio $\tilde{r}_{(p)}$ that resembles the difference to the first non-ideal state.}
\label{fig:groundspace}
\end{figure}

Furthermore, the second statement in Theorem~\ref{thm: theorem} concerning the increasing spectral gap, $r(\hat{H}_{(p)}) \ge r(\hat{H}_{\max})$, relates to the energies of the minimal states rather than their arguments. Therefore, it also tells us that all states that are minima of $\hat{H}_{\max}$ will be relatively separated from any non-ideal state by more than $r(\hat{H}_{\max})$ in $\hat{H}_{(p)}$.

However, the spectral gap ratio of a Hamiltonian is the ratio of the global minimum to the first higher-energy level. This second level will likely correspond to another state that is an equally valid (degenerate) solution to the underlying problem. Consequently, finding that solution would have no negative impact. Nevertheless, it is unclear how such broken degeneracy affects the guarantees of the adiabatic theorem. This question remains open for future work.

It still holds that the relative separation to non-ideal states increases. At first glance, this may give the impression that the problem is easier to solve because of the increased gap.
In reality, this may not be the case, and other versions of relative spectral gaps, like 
\begin{equation}\label{eq:relative_gap}
    \frac{\lambda_2-\lambda_1}{\lambda_{\max}-\lambda_1},
\end{equation}
are more expressive in that regard~\cite{Alessandroni_2025}. That is, for statements concerning the argument of the minimum as we give in Theorem~\ref{thm: theorem}, the spectral gap ratio $\left(\lambda_2-\lambda_1\right)/\lambda_1$ is tailored. For statements about overall solvability in the context of current methods for finding ground states, spectral gap ratios that account for the entire energy range between the highest ($\lambda_{\max}$) and lowest ($\lambda_{1}$) levels are better suited. Furthermore, Eq.~\eqref{eq:relative_gap} has the combined benefits of being invariant to additive shifts (like the canonical spectral gap $\lambda_2-\lambda_1$) and multiplicative shifts (like $\left(\lambda_2-\lambda_1\right)/\lambda_1$).
\end{appendix}
\end{document}